\providecommand{\U}[1]{\protect\rule{.1in}{.1in}}
\newtheorem{theorem}{Theorem}
\newtheorem{lemma}[theorem]{Lemma}
\newtheorem{remark}[theorem]{Remark}
\newenvironment{proof}[1][Proof]{\noindent\textbf{#1.} }{\ \rule{0.5em}{0.5em}}
\newcommand{\ii}{\mathrm{i}}
\newcommand{\R}{{\mathbb R}}
\newcommand{\cR}{{\mathcal{R}}}
\renewcommand{\Im}{\mbox{\rm Im}\,}
\begin{document}
\title{\textbf{Revisiting the problem of existence of surface Rayleigh waves with  impedance  boundary conditions}}
\author[,1]{Fabio A. Vallejo\thanks{Corresponding author.\newline  E-mail: fabioval@ciencias.unam.mx (F. Vallejo),\:\: https://orcid.org/0009-0004-2580-7414}}
\affil[1]{Instituto de Investigaciones en Matem\'aticas Aplicadas y en Sistemas. Universidad Nacional 
Aut\'onoma de M\'exico. Circuito Escolar s/n, Ciudad Universitaria, Ciudad de M\'{e}xico C.P. 04510 (M\'exico)}
\date{}
\maketitle
\begin{abstract}

This paper considers the problem of surface waves in an isotropic elastic half-space endowed with impedance boundary conditions as first  proposed by  Godoy et al. [Wave Motion 49 (2012), 585-594]. These conditions  are controlled by two impedance parameters, where the standard stress-free boundary condition is retrieved for zero impedance. While the existence of a unique surface wave (called Rayleigh wave) is well-established for the standard stress-free boundary condition, the introduction of more general boundary conditions may lead to the absence of surface waves or even cause the PDE boundary value problem to  become ill-posed.  For the case of Godoy's impedance boundary conditions, the problem of existence and uniqueness of a surface wave of Rayleigh type  
 was investigated  by means of the complex function method based on Cauchy-type integrals. However, this method is quite cumbersome and hard to apply. In this work, we present an alternative method based on elementary tools from calculus to deal with the problem. We consider a  particular case  where  both impedance parameters are non-zero and  demonstrate  the existence and uniqueness of the surface wave for all material and boundary parameter values. Numerical examples are presented to illustrate the effect of the impedance parameter on the speed of the surface wave. \\

\textbf{keywords:} Rayleigh waves, impedance boundary conditions,  boundary value problems of PDE, surface waves, elastic half-space.

\end{abstract}

\section{Introduction}

The existence and uniqueness of a surface wave called Rayleigh wave is a well-known property for an isotropic elastic half-space endowed with the standard stress-free boundary condition. A lot of investigations on Rayleigh wave propagation in general anisotropic elastic half-spaces have been conducted, where the  stress-free boundary condition constitutes the main paradigm \cite{Ting2002}. However, non-standard  boundary conditions of Newman type have proven to be suitable for surface wave propagation modeling in several problems \cite{Godoy1,Masky1}. A very common type of them are the so-called impedance boundary conditions,  which  are defined by linear relations between the unknown function and its derivatives at the surface \cite{Godoy1}. In the framework of elasticity,   Tiersten \cite{Tier1} used    impedance boundary conditions  to simulate the effects of a thin layer  over an elastic half-space. In a very recent paper, C.Q. Ru. \cite{CQRu23} retrieved Tiersten's boundary conditions in the study of elastic metasurfaces, a relevant topic for seismic wave propagation. Malischewsky \cite{Masky1}  showed the potential of  impedance boundary conditions to model seismic wave propagation through discontinuities \cite{Godoy1}. Although useful to model a lot of problems, 
 a mathematical difficulty arises: the existence and uniqueness of a surface wave is given by the existence of a unique real root of the so-called secular equation.  This is a cumbersome algebraic  equation that determines the solvability of the associated boundary value problem of PDEs by solutions in form of surface waves \cite{Godoy1,Masky1}. 
  For an abridged list of references on this topic, see \cite{Godoy1,HaRiv62,Rahman1995,Masky2000,Masky04,VinhOg2004,Rahman06,Li2006}.  In this work, we consider an isotropic homogeneous elastic half-space subjected to the impedance boundary conditions of Tiersten-type proposed by Godoy et al.  \cite{Godoy1}. They are defined in terms of two impedance parameters (tangential and normal) and are a natural generalization of the stress-free boundary condition in the non-dispersive regime.  
 The associated secular equation has the form of the secular equation in the stress-free case (zero impedance) plus two additional terms involving the impedance. A significant difficulty in the analysis arises from the term accounting for both  non-zero normal and tangential impedance. In order to avoid this problematic term and simplify the analysis,  earlier works  addressed  the cases where one of the impedance parameters is zero, namely the tangential case \cite{Godoy1,Vinh17} (see Eq. \eqref{imped121} with $Z_2=0$)  and the normal case \cite{Pham21} (see Eq. \eqref{imped121} with $Z_1=0$). Strangely enough, in the latter case,  it was found that surface Rayleigh waves cannot exist for certain values  of the normal impedance parameter ($Z_2$). In a very recent work \cite{Pham24}, Giang and Vinh  addressed the general case via the complex function method based on Cauchy-type integrals. The existence of the Rayleigh wave was established through an explicit formula of the speed of the wave.  However, the method involves advanced techniques from complex analysis, and as mentioned by the authors,   its implementation is not simple.  
  In this paper we revisit the problem by addressing a particular case of Godoy's boundary conditions where both the tangential and normal impedance parameter are non-zero (see Eq. \eqref{imped121} with $Z_1=Z,\: Z_2=-Z$). Through the mathematical analysis of the secular equation via standard calculus techniques, we establish both  the existence and uniqueness of the surface Rayleigh wave  for all real value of the impedance parameter. The aim of this paper is therefore to present an alternative  simpler approach to the complex function method  to study the existence of surface Rayleigh wave under impedance boundary conditions. 
 This might be useful in more general scenarios, such as those involving anisotropy or higher dimensions, where  the intricate algebraic form of secular equation prevents  establishing general results regarding the existence of surface Rayleigh waves in terms of the material and impedance parameters  \cite{CQRu23,VinhThi14,KaurSingh21,KaurSingh21a}. 

The paper is organized as follows.  Section \S \ref{secular1} presents a detailed  derivation of the associated secular equation. In Section \S \ref{rayanaly} we prove the existence and uniqueness of the surface wave for each value of the impedance parameter. Section \S \ref{numeric} describes the effect of the impedance parameter on the speed of the surface wave. Numerical examples are also presented to illustrate the established behavior for three elastic materials.

\section{\textbf{Secular equation}}

\label{secular1}

In this section, we present briefly the derivation of secular equation of surface Rayleigh waves propagating in an elastic isotropic half-space subjected to full-impedance boundary conditions. Additional details can be found in \cite{Godoy1,Masky1}. Let us consider an isotropic elastic half-space with constant mass density $\rho$ occupying the domain $\{x_2\geq0\}$. The elastic properties are determined by  the standard  Lam\'e constants $\mu,\lambda$, which satisfy $$\mu>0,\;\;\lambda+\mu>0.$$
 In terms of  the Young's modulus $E$ and the Poisson's ratio $\nu$ 
  the latter inequalities are a consequence of the  usual inequalities  $E>0$ and $-1<\nu<0.5$ (see \cite{Achen75}). We shall study planar motion in the $(x_1,x_2)$-plane, such that the components of the displacement satisfy
$u_j=u_j(x_1,x_2,t),\;\text{for}\;j=1,2,\;\text{and}\;u_3\equiv 0$. The constitutive isotropic equations characterized by  the symmetric stress tensor $\sigma$  has four relevant components related to the displacement gradients by
\begin{equation}\label{streslin}
\sigma_{11}=(\lambda+2\mu)u_{1,1}+\lambda u_{2,2},\quad
\sigma_{12}=\sigma_{21}=\mu\big(u_{1,2}+u_{2,1}\big),\quad
\sigma_{22}=(\lambda+2\mu)u_{2,2}+\lambda u_{1,1},
\end{equation}
where commas denotes differentiation with respect to spatial variables $x_i$. In absence of source terms, the equations of motion are given by:
\begin{equation}\label{stssq1}
\sigma_{11,1}+\sigma_{12,2}=\rho\ddot{u}_1,\quad
\sigma_{12,1}+\sigma_{22,2}=\rho\ddot{u}_2.
\end{equation}

Assume that the components of the displacement vector depend harmonically on time through $e^{-\ii\omega t}$. That is, $u_j=e^{-\ii\omega t}\hat{u}_j(x_1,x_2)$, $j=1,2$, where $\omega$ is the angular frequency (cf. \cite{Godoy1}). Under such condition, we assume that the surface is subjected to the impedance boundary conditions   of the form (see \cite{Godoy1}) 
\begin{equation}\label{imped121}
\hat{\sigma}_{12}+\omega Z_1\hat{u}_1=0,\quad
\hat{\sigma}_{22}+\omega Z_2\hat{u}_2=0,
 \:\:\: \text{at}\:\:x_2=0,
\end{equation}
where $\hat{\sigma}_{jl}$ denotes  the Fourier components of the stresses $\sigma_{jl}$ ($j,l=1,2$), namely $\sigma_{jl}=e^{-\ii\omega t}\hat{\sigma}_{jl}(x_1,x_2)$. $Z_1,Z_2$ are  the impedance parameters having dimension of stress/velocity \cite{Godoy1}. The standard stress-free boundary condition is retrieved for zero impedance  $Z_1=Z_2=0$. When $Z_2=0$, we retrieve the tangential boundary condition investigated in \cite{Godoy1,Vinh17} and when only $Z_1=0$ we obtain the normal impedance boundary condition studied  in \cite{Pham21}. In this paper we address the case $Z_1=Z$, $Z_2=-Z$, $Z\in\R$.

To derive the secular equation, we proceed as in \cite{Pham21,Vinh17}, considering a surface wave of Rayleigh type that  propagate in the $x_1$ direction with velocity $c$, wave number $k>0$ and angular frequency $\omega=kc$. Its displacement components has the form
\begin{equation}\label{helch99}
\begin{split}
u_1(x_1,x_2)&=A\textit{\huge e}^{-bk x_2}\textit{\huge e}^{k\ii (x_1-ct)},\\
u_2(x_1,x_2)&=B\textit{\huge e}^{-bk x_2}\textit{\huge e}^{k\ii (x_1-ct)},
\end{split}
\end{equation}
where $b$ (with $ b>0$), $A, B$, and  $c$ are unknowns to be determinated. The term ``Rayleigh type"  means that the displacement components satisfy the decay condition
\begin{equation}\label{deca3}
u_j=0,\;\text{as}\; x_2\to +\infty,\; j=1,2.
\end{equation}
We look for solutions to the boundary value problem \eqref{stssq1}-\eqref{imped121}  in form of  surface waves of Rayleigh type \eqref{helch99}. 
 To do this, we first compute the stress components associated to \eqref{helch99} using \eqref{streslin}. This yields
\begin{equation}\label{9helch}
\begin{split}
\sigma_{11}&=\big((\lambda+2\mu) k\ii A-\lambda b k B\big)\textit{\huge e}^{-b k x_2}\textit{\huge e}^{k\ii (x_1-ct)},\\
\sigma_{12}=\sigma_{21}&=\mu(-b k A+k\ii B)\textit{\huge e}^{-b k x_2}\textit{\huge e}^{k\ii (x_1-ct)},\\
\sigma_{22}&=\big(\lambda k\ii A-(\lambda+2\mu)b k B\big)\textit{\huge e}^{-b k x_2}\textit{\huge e}^{k\ii (x_1-ct)}.
\end{split}
\end{equation}
Direct substitution of \eqref{9helch} into the differential equation \eqref{stssq1} and straightforward algebraic simplifications lead to the following linear homogeneous system in the variable $(A,B)$  
\begin{equation}\label{eq9helch}
\begin{split}
\big((c^2-c_1^2)+b^2c_2^2\big)A-b\ii(c_1^2-c_2^2)B=0,\\
-b\ii(c_1^2-c_2^2)A+\big((c^2-c_2^2)+b^2c_1^2\big)B=0,
\end{split}
\end{equation}
where $c_1:=\sqrt{(\lambda+2\mu)/\rho},\;c_2:=\sqrt{\mu/\rho}$ are the speed of the pressure and shear wave, respectively. 
 Non-trivial solutions of the system above are necessary to have non-trivial solutions of the form \eqref{helch99}, so  the determinant of the system \eqref{eq9helch} must vanish. After algebraic manipulation, we find that this happens iff $b=\pm b_1$ or $b=\pm b_2$, where
\begin{equation}\label{const15}
b_1=\sqrt{1-\dfrac{c^2}{c_1^2}},\;\;\;\;\; b_2=\sqrt{1-\dfrac{c^2}{c_2^2}}.
\end{equation}
Note that $b_1$ and $b_2$ are positive real number, provided that $0<c<c_2<c_1$ \cite{Pham21}. In order to fulfill the decaying condition outlined in \eqref{deca3}, we select the positive solutions for $b$, namely $b=b_j$, $j=1,2$. Now, solving the system \eqref{eq9helch}  for each value $b=b_1$, $b=b_2$ (separately) shows that the infinite set of solutions, for each value $b_j$, are respectively spanned by the vectors
\begin{equation}\label{solalge1}
\begin{pmatrix}A\\  B\end{pmatrix}=\begin{pmatrix}1\\    \frac{-b_1}{\ii}\end{pmatrix}\qquad \qquad \begin{pmatrix}A\\  B\end{pmatrix}=\begin{pmatrix}1\\    \frac{\ii}{b_2}\end{pmatrix}.
\end{equation}
Replacing in \eqref{helch99}, we obtain two linear independent Rayleigh-type  solutions $u^{(1)}$, $u^{(2)}$  with  components  given by
\begin{equation}\label{Raymod3q}
\begin{split}
u^{(1)}_1&=\textit{\huge e}^{-b_1 k x_2}\textit{\huge e}^{k\ii (x_1-ct)}\\
u^{(1)}_2&=\frac{-b_1}{\ii}\textit{\huge e}^{-b_1 k x_2}\textit{\huge e}^{k\ii (x_1-ct)}
\end{split}\;\;\;\;\;
\begin{split}
u^{(2)}_1&=\textit{\huge e}^{-b_2 k x_2}\textit{\huge e}^{k\ii (x_1-ct)}\\
u^{(2)}_2&=\frac{\ii}{b_2}\textit{\huge e}^{-b_2 k x_2}\textit{\huge e}^{k\ii (x_1-ct)}
\end{split}
\end{equation}
They are associated to the $P$-wave and $S$-wave, respectively (bulk waves). If we consider just one of these solutions, then there are values of the impedance parameters $Z_1,Z_2$ for which the boundary condition does not hold. Indeed, let us take for instance any scalar multiple of $u^{(1)}$ in \eqref{Raymod3q}. Since  the boundary condition \eqref{imped121} is expressed in terms of  $\hat{\sigma}_{jl}=\sigma_{jl} e^{\ii\omega t}$ ($j,l=1,2$)  with $\omega=k c$, we can easily retrieve $\hat{\sigma}_{12}$, $\hat{\sigma}_{22}$ (corresponding to $u^{(1)}$) from  the first solution in \eqref{solalge1} and using \eqref{9helch}.   Substituting into the boundary condition \eqref{imped121} yields the following algebraic system of equations
\begin{equation}\label{bounfake}
\left\{\begin{matrix}
-2\mu k b_1+Z_1\omega=0\\
\lambda k \ii-(\lambda+2\mu)b_1^2 k\ii+Z_2 \omega  b_1\ii=0.
\end{matrix}\right. 
\end{equation}
Making $Z_1=0$, we find that the first equation holds only when $b_1=0$ (that is, $c=\pm c_1$) and hence, upon substitution of $b_1=0$, the second equation reduces to $\lambda=0$,  inasmuch as $k\neq0$. Consequently, once the Lam\'e constants are fixed so that $\lambda\neq0$, the first mode $u^{(1)}$ in  \eqref{Raymod3q} does not satisfy the boundary condition when $Z_1=0$ and $Z_2\in\R$.  Similarly, there are values of the impedance parameters for which the second mode $u^{(2)}$ in \eqref{Raymod3q} does not satisfy the boundary condition. Hence, a surface Rayleigh wave is necessarily a linear combination of $u^{(1)}$ and $u^{(2)}$ in \eqref{Raymod3q}. That is,  
  (see, \cite{Achen75,Pham21})
\begin{equation}\label{moddef56}
u_1=\big(A_1\textit{\huge e}^{-k b_1 x_2}+A_2\textit{\huge e}^{-k b_2 x_2}\big)\textit{\huge e}^{k\ii (x_1-ct)},\quad
u_2=\Big(-\frac{b_1}{\ii}A_1\textit{\huge e}^{-k b_1 x_2}+\frac{\ii}{b_2}A_2\textit{\huge e}^{-k b_2 x_2}\Big)\textit{\huge e}^{k\ii (x_1-ct)},
\end{equation}
where the constants  $A_j$ ($j=1,2$) have to be chosen so that  the impedance boundary condition \eqref{imped121} holds. 
  On substituting the displacement $\hat{u}_j$ and the stresses $\hat{\sigma}_{jl}$ (by the usage of \eqref{9helch}) associated to \eqref{moddef56} into the boundary condition \eqref{imped121},  one obtains a homogeneous linear system of algebraic equations  for the amplitudes $A_1,A_2$ (see Equation (11) in \cite{Pham21})
 \begin{equation}\label{ampliec0}
\begin{aligned}
\big(-2\mu k b_1+\omega Z_1\big)A_1-\Big(\mu k(b_2+\frac{1}{b_2})-\omega Z_1\Big)A_2=0,\\
\Big(\frac{1}{\ii}(-\lambda k+(\lambda+2\mu)k b_1^2)+ \omega Z_2 b_1\ii\Big)A_1+\Big(-2\mu k\ii+\frac{\omega  Z_2\ii}{b_2}\Big)A_2=0.
\end{aligned}
\end{equation}
  Since we are looking for non-trivial surface waves ($A_1,A_2\neq(0,0)$), the determinant of the system hereabove must vanish. Simplifying and  taking into account that $c_2^2\rho(1+b_2^2)k=-\lambda k+(\lambda+2\mu)k b_1^2$, lead to the secular equation in the variable $c$
\begin{equation}\label{finsec5}
\begin{aligned}
\cR(c;Z_1,Z_2):&=\left(2-\dfrac{c^2}{c_2^2}\right)^2 -4\sqrt{1-\dfrac{c^2}{c_2^2}}\sqrt{1-\dfrac{c^2}{c_1^2}}+\dfrac{c^3}{\mu c_2^2}\left(Z_1\sqrt{1-\dfrac{c^2}{c_2^2}}+Z_2 \sqrt{1-\dfrac{ c^2}{c_1^2}}\right)\\ &+c^2\frac{Z_1 Z_2}{\mu^2}\left(\sqrt{1-\dfrac{c^2}{c_2^2}}\sqrt{1-\dfrac{c^2}{c_1^2}}-1\right)=0.
\end{aligned}
\end{equation}
The secular equation determines the solvability of the boundary value problem \eqref{stssq1}-\eqref{imped121}  by surface waves of Rayleigh type \eqref{helch99}. Thus, whenever a real   zero $c$ of \eqref{finsec5} occurs then there exists a surface Rayleigh wave. Note that  \eqref{goal1} has the form of the stress-free secular equation ($Z_1=Z_2=0$) plus two additional terms accounting for the impedance. In the tangential  and normal  impedance cases, ($Z_2=0$ and $Z_1=0$ respectively)  the associated secular equations involve only one additional term.
 \begin{remark}
 Recently,  the author proves in \cite{Fab23} that the secular equation  \eqref{finsecG} (in the variable $c$) does not possess roots outside the real axis for all $Z_1,Z_2\in\R$. This fact not only simplifies the problem of existence of a surface wave  but is also related to the well-posedness of the problem. Indeed, it is well-known in the  theory of hyperbolic PDEs, that solutions of the form \eqref{moddef56} associated to roots $c$ of the secular equation with $\Im c>0$ cause the ill-posedness of the associated boundary value problem of PDE. That is, the existence or uniqueness of the solution fail to hold once the data of the problem (L\'ame constants, source term, initial data) are prescribed. This is the case, for instance, when the boundary condition for system \eqref{stssq1} is defined by prescribing the stress vector to be proportional to the velocity at the boundary of the half-space (namely, $\sigma=\gamma\ \dot{u}$, $\gamma\in\R$).
  The associated secular equation  has at least one root with $\{\Im c>0\}$ for each positive value of the impedance parameter $\gamma$, implying the ill-posedness of the boundary value problem of PDE.  Conversely, when $\gamma<0$ the problem is well-posed  (refer to Proposition 5.1 in \cite{BRSZ} for details). Moreover, it was shown in \cite{Fab23} that negative values of $\gamma$ deviates the Rayleigh wave under the stress-free boundary condition (case $\gamma=0$) into a surface wave with a complex-valued velocity.

 \end{remark}
 In terms of the dimensionless variables 
 \begin{equation}\label{newvar1}
 x=c^2/c_2^2,\;\:\delta_j=Z_j/ \sqrt{\mu\rho}\;(j=1,2)\;\:\text{and}\;\:\gamma=c_2^2/c_1^2=\frac{\mu}{\lambda+2\mu},
\end{equation}
the secular equation can be written as (see \cite{Pham21})
\begin{equation}\label{finsecG}
\begin{aligned}
f_1(\gamma,\delta_1,\delta_2,x):=(2-x)^2&-4\sqrt{1-x}\sqrt{1-\gamma x}+x\sqrt{x}\big(\delta_1\sqrt{1-x}+\delta_2\sqrt{1-\gamma x}\big)\\
&+\delta_1\delta_2x\big(\sqrt{1-x}\sqrt{1-\gamma x}-1\big)=0,
\end{aligned}
\end{equation}
Note that  the assumptions $\mu>0,\:\lambda+\mu>0$ are tantamount to $c_2<c_1$, so the material parameter $\gamma\in(0,1)$. Equation \eqref{finsecG} coincides with Eq. (12) in \cite{Pham21}. The case under consideration, namely $Z_1=Z$, $Z_2=-Z$, $Z\in\R$ amounts to setting the new variables as $\delta_1=\delta$, $\delta_2=-\delta$, $\delta\in\R$. Thus, the secular equation to analyze is given by
 \begin{equation}\label{goal1}
g(\gamma,\delta,x):=(2-x)^2-4\sqrt{1-x}\sqrt{1-\gamma x}+\delta x\sqrt{x}\big(\sqrt{1-x}-\sqrt{1-\gamma x}\big)\\
+\delta^2 x\big(1-\sqrt{1-x}\sqrt{1-\gamma x}\big)=0,
\end{equation}
We have the following criterion for the surface wave analysis to be carried out (see, Remark 1 in \cite{Vinh17})
\begin{remark}\label{anaray}
If a Rayleigh wave exists, then \eqref{goal1} has a solution $x_R$
so that $0 < x_R < 1$ and $x_R$ is the dimensionless squared velocity of the surface
 wave. Inversely, if \eqref{goal1} has a solution $x_R$ lying in the
interval $(0,1)$, then a surface wave is possible.
\end{remark}

\section{Surface  wave analysis}\label{rayanaly}
In this section we are going to prove the existence and uniqueness of the surface wave for the case in consideration by proving the existence of a unique root $x=x_R$ of the secular equation \eqref{goal1} in the interval $(0,1)$ for all $\gamma\in(0,1)$ and $\delta\in\R$. The existence is trivial. Indeed, observe that $g$ in \eqref{goal1} becomes a real-valued function of $x$ on the interval $[0,1]$ with $x=0$ a spurious root of $g=0$ with multiplicity $1$. Thus,  $f:=(1/x)g$ and $g$ have the same non-trivial roots on the interval $(0,1)$. So, our analysis is focused on the function  $f$ rather than $g$. This is given by
\begin{equation}\label{goal2}
f(x,\gamma,\delta):=\dfrac{(2-x)^2-4\sqrt{1-x}\sqrt{1-\gamma x}}{x}+\delta \sqrt{x}\big(\sqrt{1-x}-\sqrt{1-\gamma x}\big)
+\delta^2 \big(1-\sqrt{1-x}\sqrt{1-\gamma x}\big)=0.
\end{equation}
Note that $f$ can be defined continuously at $x=0$  inasmuch as
$\lim_{x\to0}f(x,\gamma,\delta)=-2(1-\gamma).$
After defining $f$ properly, it results continuous on $[0,1]$ with $f(0,\gamma,\delta)<0$, and  
$f(1,\gamma,\delta)=\big(\delta-\tfrac{1}{2}\sqrt{1-\gamma}\big)^2+\tfrac{1}{4}(\gamma+3)>0$
for all $\gamma\in (0,1)$ and $\delta\in\R$. The Intermediate value theorem ensures the existence of at least one root $x_R$ of $f$ on $(0,1)$ and thus a non trivial root of the secular equation \eqref{goal1} on $(0,1)$. We can  summarize as follows.
\begin{lemma}[Existence]
\label{maint1}
For $0<\gamma<1$ and the impedance parameter $\delta\in\R$, the secular equation \eqref{goal2} (or \eqref{goal1}) has at least one real solution $x_R$ within the range $(0,1)$.
\end{lemma}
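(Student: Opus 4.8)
The plan is to reduce the existence question to a single application of the Intermediate Value Theorem applied to the rescaled secular function $f=g/x$ on the closed interval $[0,1]$. First I would observe that $x=0$ is a zero of $g(\gamma,\delta,\cdot)$ of multiplicity exactly one: the first two summands of \eqref{goal1} combine into $(2-x)^2-4\sqrt{1-x}\sqrt{1-\gamma x}$, which vanishes at $x=0$, while the impedance terms carry explicit factors of $x$. Consequently $f$ as in \eqref{goal2} is well defined for $x\in(0,1]$, and its zeros there coincide with the nontrivial zeros of $g$, hence (by Remark \ref{anaray}) with the admissible dimensionless squared Rayleigh-wave speeds. It therefore suffices to produce a zero of $f$ in $(0,1)$.

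Next I would verify that $f$ extends continuously to $x=0$. The only term that requires attention is $\big[(2-x)^2-4\sqrt{1-x}\sqrt{1-\gamma x}\big]/x$: writing $(2-x)^2=4-4x+x^2$ and $\sqrt{1-x}\sqrt{1-\gamma x}=1-\tfrac{1}{2}(1+\gamma)x+O(x^2)$, the numerator equals $-2(1-\gamma)x+O(x^2)$, so the quotient tends to $-2(1-\gamma)$; the remaining two summands of $f$ vanish as $x\to0$. Thus $f(0,\gamma,\delta)=-2(1-\gamma)<0$ for every $\gamma\in(0,1)$ and every $\delta\in\R$. Then I would evaluate $f$ at the right endpoint. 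At $x=1$ one has $\sqrt{1-x}=0$, so the first summand collapses to $1$, the $\delta$-term to $-\delta\sqrt{1-\gamma}$, and the $\delta^2$-term to $\delta^2$; completing the square gives $f(1,\gamma,\delta)=\big(\delta-\tfrac{1}{2}\sqrt{1-\gamma}\big)^2+\tfrac{1}{4}(\gamma+3)>0$ for all $\gamma\in(0,1)$ and all $\delta\in\R$.

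To conclude, I would note that $f$ is continuous on the whole of $[0,1]$: it is built from $x$, $\sqrt{x}$, $\sqrt{1-x}$, $\sqrt{1-\gamma x}$ and reciprocals thereof, all continuous on $[0,1]$ once the removable singularity at $x=0$ has been filled in. Since $f$ changes sign between the two endpoints, the Intermediate Value Theorem yields at least one $x_R\in(0,1)$ with $f(x_R,\gamma,\delta)=0$, equivalently a nontrivial root of the secular equation \eqref{goal1} in $(0,1)$, which is the claim.

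There is essentially no serious obstacle here: the substance is the sign bookkeeping at the two endpoints, and the mildly delicate point is the limit at $x=0$, where one must expand the square-root product to first order carefully so that the leading term $-2(1-\gamma)x$ is correctly identified. The deeper content of the paper — uniqueness of $x_R$ — is not asserted in this statement and would be treated separately, via a monotonicity or sign analysis of the derivative of $f$ on $(0,1)$.
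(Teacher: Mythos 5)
Your proposal is correct and follows essentially the same route as the paper: divide out the spurious root at $x=0$, extend $f=g/x$ continuously to $x=0$ with value $-2(1-\gamma)<0$, compute $f(1,\gamma,\delta)=\big(\delta-\tfrac{1}{2}\sqrt{1-\gamma}\big)^2+\tfrac{1}{4}(\gamma+3)>0$, and invoke the Intermediate Value Theorem. Your explicit first-order expansion of $\sqrt{1-x}\sqrt{1-\gamma x}$ just makes the limit computation more detailed than the paper's, which states it without derivation.
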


The uniqueness of the surface Rayleigh wave is given by the uniqueness of the root in the lemma above. To prove this, we show that the first derivative $f'$ of  $x\to f(x;\gamma,\delta)$ is positive along $(0,1)$.  This implies the monotonicity of $f$ and then the uniqueness of the root.  
 However, the positiveness of $f'$ on $(0,1)$ shall be proved only for $0<\gamma<11/12$. When $\gamma\geq11/12$, the positive sign of the derivative holds along a sub-interval of the form $x\in(0,x_0)$, with $0<x_0<1$. In this case, the uniqueness of the root follows from the fact that $f$ is strictly positive along the rest of the interval, namely $[x_0,1)$. Thus, we shall consider the two cases separately. We use the following auxiliary  lemma for the proof.
 
 \begin{lemma}\label{adi23}
For all $\gamma\in (\tfrac{11}{12},1)$, it holds that
$\frac{4\sqrt{1-\gamma}}{\sqrt{1+3\gamma}}<\frac{1}{2\gamma}\Big(1-6\gamma+\sqrt{(1-6\gamma)^2+\gamma(7+16\gamma)}\Big)$
\end{lemma}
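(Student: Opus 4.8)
The plan is to rationalize the right-hand side, which turns the nested radical into a single tame rational function, and then close the inequality with very crude elementary estimates on $(\tfrac{11}{12},1)$. Put $S(\gamma)=\sqrt{(1-6\gamma)^2+\gamma(7+16\gamma)}$, so that the right-hand side equals $\tfrac{1}{2\gamma}\big(S(\gamma)-(6\gamma-1)\big)$. Multiplying numerator and denominator by the conjugate $S(\gamma)+(6\gamma-1)$ and using $S(\gamma)^2-(6\gamma-1)^2=\gamma(7+16\gamma)$ gives
\[
\frac{1}{2\gamma}\Big(1-6\gamma+\sqrt{(1-6\gamma)^2+\gamma(7+16\gamma)}\Big)=\frac{7+16\gamma}{2\big(S(\gamma)+6\gamma-1\big)} .
\]
On $(\tfrac{11}{12},1)$ one has $6\gamma-1>\tfrac92>0$, so this is a genuine positive fraction and no sign ambiguity arises.

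Next I would bound the two sides by hand over $\gamma\in(\tfrac{11}{12},1)$. For the left-hand side, the bounds $1-\gamma<\tfrac1{12}$ and $1+3\gamma>3$ give $\big(\tfrac{4\sqrt{1-\gamma}}{\sqrt{1+3\gamma}}\big)^2=\tfrac{16(1-\gamma)}{1+3\gamma}<\tfrac{16/12}{3}=\tfrac49$, so the left-hand side is $<\tfrac23$. For the right-hand side, the numerator satisfies $7+16\gamma>7+16\cdot\tfrac{11}{12}=\tfrac{65}{3}$; in the denominator, $(1-6\gamma)^2<25$ and $\gamma(7+16\gamma)<23$ (both functions being monotone on the interval) give $S(\gamma)<\sqrt{48}<7$, and $6\gamma-1<5$, so the denominator is $<2\cdot 12=24$. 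Hence the right-hand side exceeds $\tfrac{65/3}{24}=\tfrac{65}{72}$.

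Finally I would conclude: since $\tfrac23=\tfrac{48}{72}<\tfrac{65}{72}$, the left-hand side is $<\tfrac23<\tfrac{65}{72}<$ the right-hand side, which is exactly the claim. All the endpoint estimates invoked are strict on the open interval $(\tfrac{11}{12},1)$, so the strict inequality of the lemma is preserved.

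The only real content is the rationalization step; after it, the estimates have enormous slack — on $(\tfrac{11}{12},1)$ the left side stays below about $0.6$ while the right side stays above about $0.9$ — so no monotonicity analysis, no quartic-root discussion, and no calculus is required, and even quite loose rational bounds finish the argument. If one wanted to make the margin fully explicit, a one-line remark that $S$ is increasing on $(\tfrac{11}{12},1)$ (it is a convex parabola under the root, minimized far to the left) upgrades the estimate to $S(\gamma)<S(1)=\sqrt{48}$, but this is optional given the cruder bound $S(\gamma)<7$ already used.
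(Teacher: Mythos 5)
Your proof is correct, and it takes a genuinely different route from the paper's. The paper first discards the factor $\sqrt{1+3\gamma}$ via the elementary bound $\sqrt{1+3\gamma}>2\gamma$, reducing the claim to the stronger inequality $4\sqrt{1-\gamma}<1-6\gamma+\sqrt{(1-6\gamma)^2+\gamma(7+16\gamma)}$ on $[\tfrac{11}{12},1)$, and then proves this by a calculus argument: the difference $h(\gamma)$ of the two sides has positive derivative on the interval and is positive at the left endpoint (the paper's stated value of $h(11/12)$ contains a typo --- it should be $\tfrac{11}{6}-\tfrac{2}{\sqrt{3}}$, not $\tfrac{11}{12}-\tfrac{2}{\sqrt{3}}$, the latter being negative --- but the correct value is indeed positive, so the paper's argument stands). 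You instead rationalize the right-hand side to $\tfrac{7+16\gamma}{2(S(\gamma)+6\gamma-1)}$ and close the gap with crude interval arithmetic: left side below $\tfrac{2}{3}$, right side above $\tfrac{65}{72}$. I checked your estimates; they are all valid ($S^2=52\gamma^2-5\gamma+1<48$, $6\gamma-1\in(\tfrac{9}{2},5)$, etc.), and the numerical picture (left side roughly $0.6$ at $\gamma=\tfrac{11}{12}$ decreasing to $0$, right side staying near $0.96$--$1.0$) confirms the large slack. What your approach buys is the elimination of all differentiation and monotonicity-of-differences reasoning --- and, as a side benefit, it is immune to the kind of endpoint-evaluation slip that crept into the paper's proof; what it gives up is only that the paper's intermediate inequality \eqref{ineq61} is slightly stronger than what is needed and is stated as a reusable fact, whereas your bounds are tailored to this one interval.
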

\begin{proof}
The result follows from the inequality  
\begin{equation}\label{ineq61}
a:=4\sqrt{1-\gamma}<1-6\gamma+\sqrt{(1-6\gamma)^2+\gamma(7+16\gamma)}=:b,\quad \gamma\in[11/12,1).
\end{equation}
Indeed, note first that $\sqrt{1+3\gamma}>2\gamma$ for all $\gamma\in(0,1)$. Thus, inverting the latter inequality and using \eqref{ineq61} give $a/\sqrt{1+3\gamma}<a/2\gamma<b/2\gamma$, which is the desired inequality. Let us prove \eqref{ineq61}. Define
$$h(\gamma):=1-6\gamma+\sqrt{(1-6\gamma)^2+\gamma(7+16\gamma)}-4\sqrt{1-\gamma}.$$
A straightforward calculation gives
$$h'(\gamma)=-6+\frac{2}{\sqrt{1-\gamma}}+\frac{7+32\gamma+12(6\gamma-1)}{\sqrt{(1-6\gamma)^2+\gamma(7+16\gamma)}}$$
Since $\gamma\in[11/12,1)$, it easy to verify that $6\gamma-1>0$ and also that  $-6+2/\sqrt{1-\gamma}>0$ (by removing the square root).
Thus $h'$ is positive on $[11/12,1)$. That is, $h$ is a strictly increasing function and since $h(11/12)=\frac{11}{12}-\frac{2}{\sqrt{3}}>0$, we conclude $h>0$ for all $11/12<\gamma<1$. This implies \eqref{ineq61} and the result.
\end{proof}

\begin{theorem}[Uniqueness]\label{uniroot}
The root in lemma \ref{maint1} is unique.
\end{theorem}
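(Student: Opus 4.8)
The plan is to control the sign of the derivative $f'(x)$ of $x\mapsto f(x,\gamma,\delta)$ on $(0,1)$, uniformly in $\delta$. Grouping powers of $\delta$, write $f(x,\gamma,\delta)=f_0(x)+\delta f_1(x)+\delta^2 f_2(x)$, where $f_0(x)=\big[(2-x)^2-4\sqrt{1-x}\sqrt{1-\gamma x}\big]/x$, $f_1(x)=\sqrt{x}\big(\sqrt{1-x}-\sqrt{1-\gamma x}\big)$ and $f_2(x)=1-\sqrt{1-x}\sqrt{1-\gamma x}$. Then $f'(x)=f_0'(x)+\delta f_1'(x)+\delta^2 f_2'(x)$ is a quadratic in $\delta$, and the first thing to check is that its leading coefficient is positive: a short computation gives $f_2'(x)=\big((1+\gamma)-2\gamma x\big)/\big(2\sqrt{(1-x)(1-\gamma x)}\big)$, and $(1+\gamma)-2\gamma x>1-\gamma>0$ for $x\in(0,1)$. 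Hence $f'(x)>0$ \emph{for every} $\delta\in\R$ precisely when the discriminant $\Delta(x):=f_1'(x)^2-4f_0'(x)f_2'(x)$ is strictly negative; note that this already forces $f_0'(x)>0$. I would therefore first record the (classical) monotonicity of the stress-free secular function, $f_0'(x)>0$ on $(0,1)$ for all $\gamma\in(0,1)$ — itself obtained by isolating the single surd $\sqrt{(1-x)(1-\gamma x)}$ in $x\,f_0'(x)\ge 0$, squaring, and reducing to a polynomial inequality in $x$ — and then turn to the sign of $\Delta$.

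For $0<\gamma<\tfrac{11}{12}$ the claim is $\Delta(x)<0$ throughout $(0,1)$. Substituting the explicit expressions for $f_0',f_1',f_2'$ into $f_1'(x)^2<4f_0'(x)f_2'(x)$ and clearing the positive factors (powers of $x$ and of $\sqrt{(1-x)(1-\gamma x)}$), the inequality takes the shape $L(x,\gamma)>\sqrt{(1-x)(1-\gamma x)}\;M(x,\gamma)$ with $L,M$ explicit polynomials in $x$ and $\gamma$. One checks that $L>0$ on $(0,1)$; then where $M\le 0$ the inequality is immediate, while where $M>0$ it is equivalent, after a legitimate squaring, to the polynomial inequality $L(x,\gamma)^2>(1-x)(1-\gamma x)\,M(x,\gamma)^2$, which is verified on $(0,1)\times\big(0,\tfrac{11}{12}\big)$ by factoring out the evident positive factors and estimating the remainder (subdividing the $x$-range if necessary). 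With $\Delta<0$ on $(0,1)$ and $f_2'>0$ one gets $f'>0$ on $(0,1)$ for every $\delta$, so $f$ is strictly increasing and the root supplied by Lemma~\ref{maint1} is unique.

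For $\tfrac{11}{12}\le\gamma<1$ the estimate above no longer covers all of $(0,1)$, and the argument is localized at the explicit point $x_0=x_0(\gamma):=\tfrac{1}{2\gamma}\big(1-6\gamma+\sqrt{(1-6\gamma)^2+\gamma(7+16\gamma)}\big)$, which is the positive root of $4\gamma x^2+4(6\gamma-1)x-(7+16\gamma)=0$ and lies in $(0,1)$ exactly when $\gamma>\tfrac{11}{12}$. Two things are then shown: (i) $\Delta(x)<0$ on $(0,x_0)$, by the same surd-clearing and squaring as above but on the shorter range; and (ii) $f(x,\gamma,\delta)>0$ for all $x\in[x_0,1)$ and all $\delta\in\R$. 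Since $f_2(x)=1-\sqrt{(1-x)(1-\gamma x)}>0$ on $(0,1)$, (ii) is again a discriminant condition, namely $f_1(x)^2<4f_0(x)f_2(x)$ on $[x_0,1)$; clearing denominators reduces it to a polynomial inequality valid for $x>4\sqrt{1-\gamma}/\sqrt{1+3\gamma}$, and Lemma~\ref{adi23} is precisely the statement $4\sqrt{1-\gamma}/\sqrt{1+3\gamma}<x_0$, which places all of $[x_0,1)$ in the region of validity. Granting (i) and (ii): by (ii) every root of $f$ in $(0,1)$ lies in $(0,x_0)$, and by (i) together with $f_2'>0$ the function $f$ is strictly increasing there and so has at most one root; combined with Lemma~\ref{maint1}, the root is unique.

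The main obstacle is the reduction of the two surd inequalities $(f_1')^2<4f_0'f_2'$ and $f_1^2<4f_0f_2$ to polynomial inequalities: the squaring step is legitimate only after correctly pinning down the signs of $L$, $M$ (and of their analogues in (ii)), and the threshold $\gamma=\tfrac{11}{12}$ is precisely where a sign that is favorable on all of $(0,1)$ for smaller $\gamma$ stops being so — which is what forces the two-case split. Carrying out the ensuing polynomial estimates, though elementary, is the bulk of the work.
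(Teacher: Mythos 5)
Your proposal follows essentially the same route as the paper: view $f'$ (resp.\ $f$) as a quadratic in $\delta$ with positive leading coefficient, show the discriminant is negative on $(0,1)$ when $\gamma\le 11/12$ and on $(0,x_0)$ otherwise, and use Lemma~\ref{adi23} to place all of $[x_0,1)$ inside the region where $f$ itself has negative discriminant. The only substantive difference is bookkeeping: the paper proves the stronger statement $f'>1$ by working with the shifted quadratic $f'-1$, whose discriminant then shares the common factor $\big(1-\tfrac{1}{\Theta}\big)^2$ with $B^2$ and collapses to the sign of the single quadratic $P_\gamma(x)=4\gamma x^2+4(6\gamma-1)x-(7+16\gamma)$ — a simplification worth adopting when you carry out the polynomial estimates you have deferred, since a direct attack on $(f_1')^2<4f_0'f_2'$ is considerably messier.
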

\begin{proof}
Let us define 
$\Theta:=\dfrac{\sqrt{1-\gamma x}}{\sqrt{1-x}}.$
Note that $\Theta>1$, provided that $x,\gamma\in (0,1)$. It is not hard to verify from \eqref{goal2} that the derivative of $x\to f(x,\gamma,\delta)$ is given by:
\begin{equation}\label{deriv0}
f'(x,\gamma,\delta)=A(x,\gamma)\delta^2+B(x,\gamma)\delta+C(x,\gamma),
\end{equation}
where the prime mark denotes the derivative with respect to $x$ and the coefficients are given by
\[
A(x,\gamma):=\Theta+\frac{\gamma}{\Theta},\quad
B(x,\gamma):=-\dfrac{1}{2\sqrt{x}}\left(1-\frac{1}{\Theta}\right)\left(2\Theta\sqrt{1-x}+\frac{1}{\sqrt{1- x}}\right),\quad
C(x,\gamma):=\dfrac{2\Theta}{x^2}\left(1-\frac{1}{\Theta}\right)^2+1.
\]
There are two cases to consider.\\

 \emph{\bf Case 1: $0<\gamma\leq11/12$.}
We shall prove that $f'$ is positive along $(0,1)$ for all $\delta\in\R$. We exploit the form of $f'$ as  a second-degree  polynomial in $\delta$. Note that  the coefficient of $\delta^2$ in \eqref{deriv0} is always positive. So, the positivity of $f'$ can be derived from the fact that there are no real roots of $f'$ in $\delta$ for each $x\in(0,1)$. That is, its algebraic discriminant has negative sign for $x\in(0,1)$ and $0<\gamma\leq11/12$. For the sake of simplicity, we prove the property for the polynomial $f'-1$, which also implies the property for $f'-1+1=f'$ by monotony. Since the coefficients of $f'-1$ are $A,B$ and $C-1$,  it is not hard to verify that the associated algebraic discriminant $\Delta$ is given by: 
\begin{equation}\label{discri1}
\Delta(x,\gamma) := \dfrac{1}{4x}\left(1-\frac{1}{\Theta}\right)^2\left(4(1-\gamma x)+\frac{1}{1-x}+4\Theta\right)- \dfrac{4}{x^2}\left(1-\frac{1}{\Theta}\right)^2(\Theta^2+\gamma).
\end{equation}
Since $x\in(0,1)$ and $\Theta>1$, then $1/x<1/x^2$ and $4\Theta<4\Theta^2$ (in the first term hereabove). Thus, we have
\begin{align}
\Delta(x,\gamma) &< \dfrac{1}{4x^2}\left(1-\frac{1}{\Theta}\right)^2\left(4(1-\gamma x)+\frac{1}{1-x}+4\Theta^2-16\Theta^2-16\gamma\right)\nonumber\\
&= \dfrac{1}{4x^2}\left(1-\frac{1}{\Theta}\right)^2\frac{1}{1-x}\Big(4\gamma x^2+4(6\gamma-1)x-(7+16\gamma)\Big).\label{discri2}
\end{align}
Since the remaining factors are positive, the sign of the last expression hereabove, and therefore the sign of $\Delta$, depends on the sign of the polynomial
$P_{\gamma}(x):=4\gamma x^2+4(6\gamma-1)x-(7+16\gamma).$
By the assumption $0<\gamma<11/12$, it is easy to verify that $P_{\gamma}$ takes negative values at the end-points of the interval $[0,1]$. Indeed,  $P_{\gamma}(0)=-(16\gamma+7)$ and $P_{\gamma}(1)=12\gamma-11\leq0$. Therefore, we claim $P_{\gamma}(x)\leq0$ for all $x\in(0,1)$. Otherwise, it would contradict the convexity of the quadratic function $x\to P_{\gamma}(x)$, provided the coefficient of $x^2$ is positive. Thus, the algebraic discriminant $\Delta(x,\gamma)$ has negative sign for all $x\in(0,1)$. This implies that $f'-1$, and thus also $f'$, does not have real roots in $\delta$. Since the coefficient $A$ of $\delta^2$  is always positive, then $f'$ (as a quadratic real-valued function of $\delta$) is positive on $\R$  for each $x\in(0,1)$ and $0<\gamma<11/19$. That is, $x\to f(x,\gamma,\delta)$ is strictly monotone on (0,1) for all $\delta\in\R$, implying the uniqueness of the root from lemma \ref{maint1}.\\

 \emph{\bf Case 2: $11/12<\gamma<1$.} 
 In this case, we can also use \eqref{discri2}. However, although  $P_{\gamma}(0)<0$ still remains valid, now $P_{\gamma}(1)=12\gamma-11>0$. Thus, in contrast to the latter case, there is a root $x=r_+$ of $P_{\gamma}$ on $(0,1)$. This root is
 $$r_+(\gamma):=\frac{1}{2\gamma}\Big(1-6\gamma+\sqrt{(1-6\gamma)^2+\gamma(7+16\gamma)}\Big).$$
 The another root is clearly negative. Therefore, in this case we have
 $P_{\gamma}(x)\leq0,\:\text{for all}\: x\in(0,r_+)$
 and then, \eqref{discri2} implies that $\Delta<0$ as long as $x\in(0,r_+)$. By the same argument as in the last case,  $x\to f(x,\gamma,\delta)$ is monotone along $x\in(0,r_+)$ for all $\delta\in\R$ and $11/12<\gamma<1$. We claim that the root from lemma \ref{maint1}  lies within  $(0,r_+)$ and therefore is unique, because as we shall see, $f$ is positive for all $x\in[r_+,1)$. Indeed, consider $f$ in \eqref{goal2} as a second-degree polynomial in $\delta$
\begin{equation}\label{deriv0}
f(x,\gamma)=C_0(x,\gamma)+B_0(x,\gamma)\delta+A_0(x,\gamma)\delta^2,
\end{equation}
where $A_0(x,\gamma):=1-\sqrt{1-x}\sqrt{1-\gamma x},\quad
B_0(x,\gamma):=\sqrt{x}\big(\sqrt{1-x}-\sqrt{1-\gamma x}\big),\quad
C_0(x,\gamma):=x-4+\frac{4}{x}A_0(x,\gamma).
$
We proceed as in the last case. Since $A_0$ is positive, it is enough to prove that there are no real roots of $f$ in $\delta$ for each $x\in[r_+,1)$. That is, the algebraic discriminant $\Delta_0$ has negative sign. A straightforward calculation gives
\begin{equation}\label{discri0}
\Delta_0(x,\gamma) := \dfrac{4-x}{x}\left(x^2+6x-8+(8-2x)\sqrt{1-x}\sqrt{1-\gamma x}\right)-\gamma(x^2+16x-16).
\end{equation}
Since $x,\gamma \in(0,1)$, then $8-2x>0$ and $\sqrt{1-x}<\sqrt{1-\gamma x}$. Thus, it follows that
\begin{align}
\Delta_0(x,\gamma) &< \dfrac{4-x}{x}\left(x^2+6x-8+(8-2x)(\sqrt{1-\gamma x})^2\right)-\gamma(x^2+16x-16)\nonumber\\
&= (1+3\gamma)\left(\frac{4\sqrt{1-\gamma}}{\sqrt{1+3\gamma}}+x\right)\left(\frac{4\sqrt{1-\gamma}}{\sqrt{1+3\gamma}}-x\right).\label{posi1}
\end{align}
Note that the sign of the last expression depends on the third factor, as the remaining ones are positive. It is not hard to verify that $0<\frac{4\sqrt{1-\gamma}}{\sqrt{1+3\gamma}}<1$ for all $\gamma>15/19$ and therefore, also for the range of $\gamma$ under consideration. Since $x\in(0,1)$, the second-degree polynomial in the right hand side of \eqref{posi1}, and thus $\Delta_0$, has negative sign for all $x\in(\frac{4\sqrt{1-\gamma}}{\sqrt{1+3\gamma}},1)$. Lemma \ref{adi23} ensures $\frac{4\sqrt{1-\gamma}}{\sqrt{1+3\gamma}}<r_+<1$. Therefore, in particular we have
$\Delta_0(x,\gamma)<0,\:\text{for all}\: x\in[r_+,1).$
This implies $f$ as a second-degree polynomial in $\delta$ does not have roots. Thus, given that the coefficient $A_0$ of $\delta^2$ is positive,  $f$ has positive sign for all $\delta\in\R$ and $x\in[r_+,1)$. 
\end{proof}

\begin{remark}

 Lemma \ref{maint1}  and Theorem \ref{uniroot} show that an isotropic elastic half-space subjected to  impedance boundary conditions of the form  \eqref{imped121} with $Z_1=Z$, $Z_2=-Z$, $Z\in\R$ supports a unique surface Rayleigh wave for all  $\gamma\in(0,1)$ (material parameter) and  $\delta=Z/ \sqrt{\mu\rho}\in\R$ (dimensionless impedance parameter). This result is consistent with that derived by Giang and Vinh  (see Theorem 1 (i) in \cite{Pham24}) via the complex function method. This is quite remarkable, considering that in the case of a normal impedance boundary condition with $Z_1 = 0$ and $Z_2 \in \mathbb{R}$, there are values of $Z_2$ for which surface waves cannot exist (see Theorem 3 in \cite{Pham21}). 

\end{remark}

\section{Numerical results}\label{numeric}
\begin{table}
\begin{center}
\vspace{.1cm}
\resizebox{!}{1.1cm}{\begin{tabular}{lcc}
\hline
Material &  $\nu$  & $\gamma$ \\ 
\hline
Gold & 0.44 & 0.107 \\
Chromium & 0.21 & 0.367 \\
Polymer Foam & -0.7 & 0.705 \\
\hline
\end{tabular}}
\vspace{.1cm}
\caption{Poisson's ratio $\nu$ and the dimensionless parameter $\gamma$ of the materials considered.}
\label{materials}
\end{center}
\end{table}

In this section, we describe how  the (squared) dimensionless surface wave speed $x_R$  is affected by the real  impedance $\delta\in\R$. Observe that for large values of $\delta$, the relevant term in the secular equation \eqref{goal2} is $A_0\delta^2=0$, which vanishes in $x\in[0,1]$ iff $x=0$. This indicates that the dimensionless surface wave speed $x_R$ approaches asymptotically to zero as the impedance goes to $\pm\infty$. To determine the behavior around $\delta=0$,   
 we can compute from \eqref{goal2}  the implicit derivative of  $x_R=x_R(\delta)$ for fixed material parameter $\gamma$. This is given by 
\begin{equation}\label{derivxr}
\frac{d\:x_R}{d \delta}=\dfrac{-B_0-2\delta A_0}{f'(x_R,\gamma,\delta)}.
\end{equation}
Recall that $f'$ is positive along $x\in(0,1)$ for all $\delta\in\R$, $\gamma\in(0,11/12)$. Thus, the sign of $\tfrac{d\:x_R}{d \delta}$  depends upon the numerator in $\eqref{derivxr}$. Since $B_0<0$ and $A_0>0$ for all $x,\gamma\in(0,1)$, it is easy to verify the positiveness of $\tfrac{d\:x_R}{d \delta}$  along $\delta\in(-\infty,0)$, including $\delta=0$. That is, $\delta\to x_R(\delta)$ is an increasing function along $(-\infty,0)$ and locally around a vicinity of $\delta=0$. Given the vanishing  behavior at infinity, $x_R$ attains at least a maximum for some positive value of the impedance. 
\begin{figure}[t]
\begin{center}
\includegraphics[scale=1.1, clip=true]{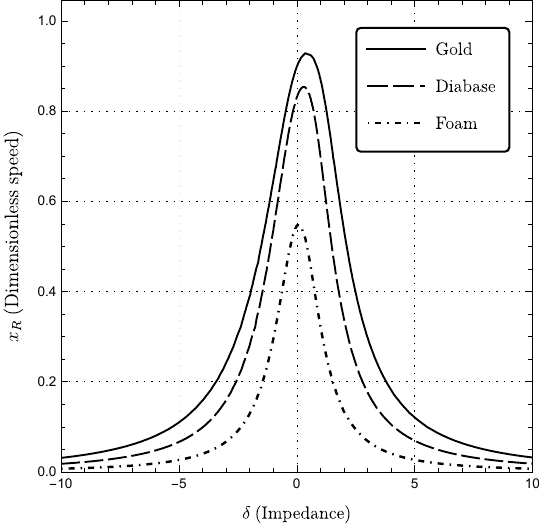} 
\end{center}
\caption{Plot of the dimensionless surface wave speed $x_R$ vs the dimensionless impedance  $\delta$, for the three elastic material under consideration. 
 (Color online.)}\label{figCGa}
\end{figure}
To  illustrate this behavior, we proceed as in \cite{Godoy1} by considering three elastic materials.   Namely, Gold (metal),  Chromium  and a polymer foam structure (auxetic material). A detailed description of these materials and their characteristics can be found in  \cite{lind68} (for Gold and Chromium) and in \cite{Lakes87} (for the foam structure). The role of the elastic properties of a particular material in the secular equation is given through the dimensionless parameter $\gamma$, which depends on the Lame's constants. For the sake of simplicity,  we find $\gamma=\mu/(\lambda+2\mu)$ (see Eq. \eqref{newvar1}) in terms of the Young's modulus $E$ and the Poisson's ratio $\nu$ through the usual relations (see \cite{Achen75}) $$\lambda=\frac{E\nu}{(1+\nu)(1-2\nu)},\;\;\mu=\frac{E}{2(1+\nu)}.$$ Straightforward calculations gives $\gamma=\tfrac{1-2\nu}{2(1-\nu)}.$
Thus, only the  Poisson's ratio $\nu$  is relevant for evaluating the role of each material  in the relation $\delta\to x_R$. Table \ref{materials} shows this elastic property for each considered material. 
To present the results, consider an impedance $\delta$ varying from a minimum value $\delta_{min} = -10$  to a maximum value $\delta_{max} = 10$. The speed $x_R$ is calculated by solving iteratively the secular equation \eqref{goal2}. The results are presented in Fig. \ref{figCGa}.\\

\section{Discussion}
In this work we investigated the existence of surface waves in an isotropic elastic half-space endowed with a uniparametric family of impedance boundary conditions with both non-zero tangential and normal impedance parameters. The existence and uniqueness of the surface wave for each value of the impedance parameter was demonstrated  by means of the mathematical analysis of the secular equation.  The speed of the surface wave decreases asymptotically to zero for large negative and positive values of the impedance, with a maximum value for some positive value of the impedance. The theoretical findings are verified by presenting numerical results for three elastic media. The general problem was investigated by Giang and Vinh \cite{Pham24} via the complex function method based on Cauchy-type integrals. We recover the same result obtained in that work for the case analyzed here, but using a simpler approach based on standard calculus techniques.  The method relies on the fact that, as a function of the impedance $\delta$,  the secular equation takes the form of a second-degree polynomial. 

In a very recent paper, Singh and Kaur \cite{KaurSingh21a} compute the secular equation for surface waves in an  incompressible micropolar medium subjected to the impedance boundary conditions here considered \eqref{imped121}. However, given the intricate final expression, the authors used a numerical approach to study  the effect of the impedance on the Rayleigh wave speed for particular cases. Notably,  when the impedance parameters have the same magnitude (either with equal or opposite sign), the secular equation also takes the form of a second-degree polynomial. This suggests that the method here presented could be useful for obtaining general results on the existence of surface Rayleigh waves in terms of the impedance for an  incompressible micropolar media endowed with impedance boundary conditions. 
\\

\textbf{Acknowledgements}:    We highly acknowledge the support of the National Science and Technology Council (CONAHCyT) of M\'exico for their funding of this research. This work was supported  under grant CF-2019 No. 304005 and partially under grant  CF-2023-G-122.\\\\

\textbf{Declarations}\\\\
\textbf{Conflict of interest} The author declares that he has no conflict of interest.

\bibliographystyle{abbrvnat}
\bibliography{bibliography}

\def\cprime{$'\!\!$}
\begin{thebibliography}{21}
\providecommand{\natexlab}[1]{#1}
\providecommand{\url}[1]{\texttt{#1}}
\expandafter\ifx\csname urlstyle\endcsname\relax
  \providecommand{\doi}[1]{doi: #1}\else
  \providecommand{\doi}{doi: \begingroup \urlstyle{rm}\Url}\fi

\bibitem[Achenbach(1975)]{Achen75}
J.~Achenbach.
\newblock \emph{Wave Propagation in Elastic Solids}.
\newblock North-Holland Series in Applied Mathematics and Mechanics. Elsevier,
  Amsterdam, 1975.

\bibitem[Benzoni-Gavage et~al.(2002)Benzoni-Gavage, Rousset, Serre, and
  Zumbrun]{BRSZ}
S.~Benzoni-Gavage, F.~Rousset, D.~Serre, and K.~Zumbrun.
\newblock Generic types and transitions in hyperbolic initial-boundary value
  problems.
\newblock \emph{Proceedings of the Royal Society of Edinburgh Section A:
  Mathematics}, 132\penalty0 (5):\penalty0 1073--1104, 2002.
\newblock URL \url{https://doi.org/10.1017/S030821050000202X}.

\bibitem[Giang and Vinh(0)]{Pham24}
P.~T.~H. Giang and P.~C. Vinh.
\newblock {On the existence of Rayleigh waves with full impedance boundary
  condition}.
\newblock \emph{Mathematics and Mechanics of Solids}, 0\penalty0 (0):\penalty0
  10812865241266809, 0.
\newblock URL \url{https://doi.org/10.1177/10812865241266809}.

\bibitem[Godoy et~al.(2012)Godoy, Dur\'an, and N\'ed\'elec]{Godoy1}
E.~Godoy, M.~Dur\'an, and J.-C. N\'ed\'elec.
\newblock On the existence of surface waves in an elastic half-space with
  impedance boundary conditions.
\newblock \emph{Wave Motion}, 49\penalty0 (6):\penalty0 585--594, 2012.
\newblock URL \url{https://doi.org/10.1016/j.wavemoti.2012.03.005}.

\bibitem[Hayes and Rivlin(1962)]{HaRiv62}
M.~Hayes and R.~Rivlin.
\newblock A note on the secular equation for {R}ayleigh waves.
\newblock \emph{Zeitschrift Fur Angewandte Mathematik Und Physik - ZAMP},
  13:\penalty0 80--83, 07 1962.
\newblock URL \url{http://dx.doi.org/10.1007/BF01600759}.

\bibitem[Kaur and Singh(2021)]{KaurSingh21}
B.~Kaur and B.~Singh.
\newblock Rayleigh waves on the impedance boundary of a rotating monoclinic
  half-space.
\newblock \emph{Acta Mechanica}, 232\penalty0 (6):\penalty0 2479--2491, 2021.
\newblock URL \url{https://doi.org/10.1007/s00707-021-02959-w}.

\bibitem[Li(2006)]{Li2006}
X.-F. Li.
\newblock {On approximate analytic expressions for the velocity of Rayleigh
  waves}.
\newblock \emph{Wave Motion}, 44\penalty0 (2):\penalty0 120--127, 2006.
\newblock URL \url{https://doi.org/10.1016/j.wavemoti.2006.07.003}.

\bibitem[Malischewsky(1987)]{Masky1}
P.~Malischewsky.
\newblock \emph{Surface waves and discontinuities}.
\newblock 1 1987.
\newblock URL \url{https://www.osti.gov/biblio/6948756}.

\bibitem[Malischewsky(2000)]{Masky2000}
P.~G. Malischewsky.
\newblock {Comment to ``A new formula for the velocity of Rayleigh waves'' by
  D. Nkemzi [Wave Motion 26 (1997) 199--205]}.
\newblock \emph{Wave Motion}, 31\penalty0 (1):\penalty0 93--96, 2000.
\newblock URL \url{https://doi.org/10.1016/S0165-2125(99)00025-6}.

\bibitem[Malischewsky~Auning(2004)]{Masky04}
P.~G. Malischewsky~Auning.
\newblock A note on {R}ayleigh-wave velocities as a function of the material
  parameters.
\newblock \emph{Geof\'isica Internacional}, 2004.
\newblock URL \url{https://www.redalyc.org/articulo.oa?id=56843314}.

\bibitem[Pham and Vinh(2021)]{Pham21}
H.~G. Pham and P.~Vinh.
\newblock Existence and uniqueness of {R}ayleigh waves with normal impedance
  boundary conditions and formula for the wave velocity.
\newblock \emph{Journal of Engineering Mathematics}, 130:\penalty0 13, 10 2021.
\newblock URL \url{https://doi.org/10.1007/s10665-021-10170-y}.

\bibitem[Rahman and Barber(1995)]{Rahman1995}
M.~Rahman and J.~R. Barber.
\newblock {Exact Expressions for the Roots of the Secular Equation for Rayleigh
  Waves}.
\newblock \emph{Journal of Applied Mechanics}, 62\penalty0 (1):\penalty0
  250--252, 03 1995.
\newblock URL \url{https://doi.org/10.1115/1.2895917}.

\bibitem[Rahman and Michelitsch(2006)]{Rahman06}
M.~Rahman and T.~Michelitsch.
\newblock A note on the formula for the {R}ayleigh wave speed.
\newblock \emph{Wave Motion}, 43\penalty0 (3):\penalty0 272--276, 2006.
\newblock URL \url{https://doi.org/10.1016/j.wavemoti.2005.10.002}.

\bibitem[Ru(2023)]{CQRu23}
C.~Ru.
\newblock Rayleigh waves in an elastic half-space with a hard sphere-filled
  metasurface.
\newblock \emph{Mechanics Research Communications}, 131:\penalty0 104148, 2023.
\newblock URL \url{https://doi.org/10.1016/j.mechrescom.2023.104148}.

\bibitem[Singh and Kaur(2021)]{KaurSingh21a}
B.~Singh and B.~Kaur.
\newblock Rayleigh surface wave at an impedance boundary of an incompressible
  micropolar solid half-space.
\newblock \emph{Mechanics of Advanced Materials and Structures}, 29:\penalty0
  3942 -- 3949, 2021.
\newblock URL \url{https://doi.org/10.1080/15376494.2021.1914795}.

\bibitem[Tiersten(1969)]{Tier1}
H.~F. Tiersten.
\newblock Elastic surface waves guided by thin films.
\newblock \emph{Journal of Applied Physics}, 40\penalty0 (2):\penalty0
  770--789, 1969.
\newblock URL \url{https://doi.org/10.1063/1.1657463}.

\bibitem[Ting(2002)]{Ting2002}
T.~C.~T. Ting.
\newblock {An Explicit Secular Equation for Surface Waves in an Elastic
  Material of General Anisotropy}.
\newblock \emph{The Quarterly Journal of Mechanics and Applied Mathematics},
  55\penalty0 (2):\penalty0 297--311, 05 2002.
\newblock URL \url{https://doi.org/10.1093/qjmam/55.2.297}.

\bibitem[Vallejo(2025)]{Fab23}
F.~Vallejo.
\newblock The secular equation for elastic surface waves under boundary
  conditions of impedance type: A perspective from linear algebra.
\newblock \emph{Wave Motion}, 134:\penalty0 103476, 2025.
\newblock URL \url{https://doi.org/10.1016/j.wavemoti.2024.103476}.

\bibitem[Vinh and Ogden(2004)]{VinhOg2004}
P.~C. Vinh and R.~Ogden.
\newblock On formulas for the {R}ayleigh wave speed.
\newblock \emph{Wave Motion}, 39\penalty0 (3):\penalty0 191--197, 2004.
\newblock URL \url{https://doi.org/10.1016/j.wavemoti.2003.08.004}.

\bibitem[Vinh and {Thanh Hue}(2014)]{VinhThi14}
P.~C. Vinh and T.~T. {Thanh Hue}.
\newblock Rayleigh waves with impedance boundary conditions in anisotropic
  solids.
\newblock \emph{Wave Motion}, 51\penalty0 (7):\penalty0 1082--1092, 2014.
\newblock URL \url{https://doi.org/10.1016/j.wavemoti.2014.05.002}.

\bibitem[Vinh and Xuan(2017)]{Vinh17}
P.~C. Vinh and N.~Q. Xuan.
\newblock {Rayleigh waves with impedance boundary condition: Formula for the
  velocity, existence and uniqueness}.
\newblock \emph{European Journal of Mechanics - A/Solids}, 61:\penalty0
  180--185, 2017.
\newblock URL \url{https://doi.org/10.1016/j.euromechsol.2016.09.011}.

\end{thebibliography}

\end{document}